\documentclass[submission,copyright,creativecommons]{eptcs}


\usepackage[utf8]{inputenc}
\usepackage[english]{babel}
\usepackage[T1]{fontenc}

\usepackage{hyperref}

\usepackage{mathtools}
\usepackage{amsthm}
\usepackage{amsmath}
\usepackage{amssymb}
\usepackage{mathrsfs}
\usepackage{newtxtext, newtxmath}
\usepackage{float}

\usepackage{stmaryrd}
\usepackage{braket}
\usepackage{physics}
\usepackage{tikz}
\usetikzlibrary{matrix,arrows,decorations.pathmorphing}
\usetikzlibrary{decorations.markings}
\usetikzlibrary{shapes.geometric}
\usepackage{tikz-cd}
\usepackage{tikzit}

\tikzstyle{black node}=[fill=black, draw=black, shape=circle, scale=0.3]
\tikzstyle{white node}=[fill=white, draw=black, shape=circle, scale=0.6]
\tikzstyle{triangle}=[fill=white, draw=black, regular polygon, regular polygon sides=3, scale=0.5]
\tikzstyle{big white node}=[fill=white, draw=black, shape=circle, scale=0.6]
\tikzstyle{red-edge}=[color=red]
\tikzstyle{red node}=[fill=red!60, draw=black, shape=circle, scale=0.7]
\tikzstyle{new style 0}=[fill=black, draw=black, shape=circle, scale=0.3]
\tikzstyle{green node}=[fill=zx_green, draw=black, shape=circle, scale=0.7]
\tikzstyle{hadamard}=[fill=yellow, draw=black, shape=rectangle, scale=0.7]
\tikzstyle{white-box}=[fill=white, draw=black, shape=rectangle, scale=0.5]
\tikzstyle{blue-box}=[fill=blue, draw=black, shape=rectangle, scale=0.5]
\tikzstyle{red-box}=[fill=red, draw=black, shape=rectangle, scale=0.5]

\tikzstyle{mid-arrow}=[-]
\tikzstyle{dashs}=[-, dashed, line width=0.15mm]
\tikzstyle{thick}=[-, line width=0.5mm]
\tikzstyle{arrow}=[->]
\tikzstyle{invisible}=[-, draw=none]
\tikzstyle{functor}=[-, fill={rgb,255: red,240; green,240; blue,240}]
\tikzstyle{boxedge}=[-, fill=white]
\tikzstyle{red-edge}=[-, color={blue!60}, line width=0.4mm]
\tikzstyle{arrow}=[->]
\tikzstyle{blue-edge}=[-, color={red!60}, line width=0.4mm]

\usepackage{array}

\definecolor{zxgreen}{RGB}{230,254,230}
\definecolor{zxred}{RGB}{255,135,136}
\definecolor{zxblue}{RGB}{116,116,235}

\definecolor{zxdgreen}{RGB}{91,107,91}
\definecolor{zxdred}{RGB}{142,94,94}
\definecolor{zxdblue}{RGB}{61,61,77}

\renewcommand{\tt}[1]{\mathtt{#1}}
\renewcommand{\cal}[1]{\mathcal{#1}}
\renewcommand{\phi}{\varphi}

\newcommand{\sub}{\subseteq}
\newcommand{\size}[1]{\left\vert{#1}\right\vert}
\newcommand{\bb}[1]{\mathbb{#1}}
\newcommand{\xto}[1]{\xrightarrow{#1}}

\newcommand{\B}{\cal{B}}
\newcommand{\botimes}{\hat{\otimes}}
\newcommand{\F}{\cal{F}}

\newcommand{\U}{\cal{U}}

\newtheorem{definition}{Definition}[section]

\newtheorem{example}{Example}[section]
\newtheorem{proposition}{Proposition}[section]
\newtheorem{theorem}{Theorem}[section]

\definecolor{zx_red}{RGB}{232, 165, 165}
\definecolor{zx_green}{RGB}{216, 248, 216}

\usepackage{color}
\def\bR{\begin{color}{red}}
\def\bB{\begin{color}{blue}}
\def\bM{\begin{color}{magenta}}
\def\bC{\begin{color}{cyan}}
\def\bW{\begin{color}{white}}
\def\bBl{\begin{color}{black}}
\def\bG{\begin{color}{green}}
\def\bY{\begin{color}{yellow}}
\def\e{\end{color}}

\title{Quantum Linear Optics via String Diagrams}
\author{Giovanni de Felice and Bob Coecke
\institute{Quantinuum -- Quantum Compositional Intelligence\\17 Beaumont street, OX1 2NA Oxford, UK}}

\begin{document}
\maketitle

\begin{abstract}
     We establish a formal bridge between qubit-based
     and photonic quantum computing. We do this by defining a functor from
     the ZX calculus to linear optical circuits. In the process we provide a
     compositional theory of quantum linear optics which allows to reason about
     events involving multiple photons such as those required to perform
     linear-optical and fusion-based quantum computing.
\end{abstract}

\section{Introduction}

Quantum optics has pioneered experimental tests of entanglement \cite{wu1950},
nonlocality \cite{aspect1982}, teleportation \cite{boschi1998},
quantum-key distribution \cite{dixon2008}, and quantum advantage \cite{zhong2020}.
These experiments ultimately rely on the ability to process coherent states of photons
in \emph{linear optical} devices, an intractable task for classical computers \cite{aaronson2010}.
Recently, the potential of using linear optics for quantum computing has
encouraged the development of both hardware \cite{corrielli2021} and software
\cite{killoran2019, perceval2022} for photonic technologies.
The first proposal was formulated by Knill, Laflamme and Millburn in 2001 \cite{knill2001a}.
Qubits are encoded in pairs of optical modes and quantum computing may be performed
using only linear optical elements and photon detectors.
Several improvements to the original scheme have been proposed in the literature
\cite{nielsen2004, kok2007, zhang2008}.
Fusion measurements were introduced by Browne and Rudolf \cite{browne2005}.
They form the basic ingredient of a recent proposal to achieve fault-tolerant
quantum computation with photonic qubits \cite{bartolucci2021}.

String diagrams provide an intuitive language for quantum processes \cite{abramsky2004, SelingerCPM, CPaqPav, CDKZ, CKbook} and are implicitly employed in quantum software packages such as tket \cite{sivarajah2021}, PyZX \cite{kissinger2019}, lambeq \cite{kartsaklis2021},
DisCoPy \cite{defelice2020b}, Quanhoven \cite{miranda2021quantum}.
On the one hand, Coecke and Duncan \cite{coecke2008} introduced the ZX calculus, a graphical
language for reasoning about qubit quantum computing, with applications
in circuit-based \cite{duncan2019}, measurement-based \cite{backens2021}, and
fault tolerant \cite{debeaudrap2020a} quantum computing.
The axioms of this calculus feature a bialgebra structure governing the $Z$ and $X$ qubit bases.
On the other hand, Vicary and Fiore used the symmetric (or bosonic) Fock space to study the
quantum harmonic oscillator, and discovered a different bialgebra structure on
this infinite dimensional Hilbert space \cite{vicary2008, fiore2015axiomatics}.
These two foundational works are hardly ever related in the literature,
possibly because of the difference in state space cardinality.
However, it is well-known that photons in linear optics behave as quantum
harmonic oscillators. Given the developments in linear-optical quantum computing,
a formal bridge should be established between qubit-based and photonic QC.
This would allow the construction of reliable software for compiling quantum
computations into photonic circuits.

In this paper, we provide such a bridge by defining a functor
from the ZX calculus to linear optics. In the process, we unify several results
on the structure and combinatorics of quantum optical experiments.
We start by studying the category of linear optical circuits, with their
classical interpretation in terms of matrices or weighted paths (Sections \ref{sec-classical}).
We then use the work of Vicary \cite{vicary2008} to
derive a functorial model for bosonic linear optics.
Our first contribution is an explicit proof that this model is equivalent to the model
based on matrix permanents of Aaronson and Arkhipov \cite{aaronson2010}
(Section \ref{sec-boson}).
Second, we introduce a graphical calculus QPath which allows to
compute the amplitudes of linear optical events involving multiple photons,
by rewriting diagrams to normal form (Section \ref{sec-qpath}).
Finally, we construct a functor from the ZX calculus to QPath and use it to describe
the basic protocols used in linear-optical and fusion-based quantum computing
(Section \ref{sec-computing}).

\paragraph{Related work}

Graphical approaches of linear optics are widespread in the literature. Notable
examples are the matchgates introduced by Valiant \cite{valiant2001a}, corresponding
to fermionic linear optics \cite{terhal2002b}, whose amplitudes are computed by
finding the perfect matchings of a graph. Graph-theoretic methods are also widely
used in bosonic linear optics \cite{krenn2017, ataman2018}.
There are strong links between linear optics and categorical logic.
Blute et al. \cite{blute1994} studied Fock space as exponential modality for linear logic.
The fermionic version of the Fock space has been studied in \cite{defelice2019},
it forms the W core of the ZW calculus introduced by
Coecke, Kissinger and Hadzihasanovic \cite{CK, hadzihasanovic2015diagrammatic, hadzihasanovic2017algebra}.
More recently, there has been work on a diagrammatic calculus for reasoning
about polarising beam splitters for quantum control \cite{clement2020},
an informal essay describing bosonic linear optics with category theory \cite{mccloud2022},
and a complete rewriting system for the single photon semantics of linear optical
circuits \cite{clement2022}. The ZX calculus has also been used to describe
the fault-tolerant aspects of fusion-based quantum computing \cite{bombin2021}.


\section{Linear optical circuits}\label{sec-classical}

Linear optical circuits are generated by two basic physical gates.
The \emph{beam splitter} $\tt{BS}: a \otimes a \to a \otimes a$
acts on a pair of optical modes, and may be implemented using prisms or
half-silvered mirrors.
The \emph{phase shift} $\tt{S}(\alpha): a \to a$ acts on a single mode and
has a single parameter $\alpha \in [0, 2\pi]$. We depict them:
$$\scalebox{0.8}{\tikzfig{figures/linear-optics}}$$
Linear optical circuits are obtained from these gates by composing
them vertically and horizontally. They form a set $\bf{LO}$, which has the
structure of a free monoidal category, i.e.~circuits can be composed in sequence
or in parallel.

\begin{definition}
    The \emph{classical interpretation} of $\bf{LO}$ is given by a monoidal functor
    $ \U : \bf{LO} \to \bf{Mat}_\oplus$
    into the category of matrices over the complex numbers, where $\oplus$ is
    the direct sum of vector spaces. On objects $\U$ is defined by $\U(a) = \bb{C}$.
    On arrows we have:
    $$ \U(\tt{S}(\alpha)) = (e^{i\alpha})$$
    $$ \U(\tt{BS}) = \frac{1}{\sqrt{2}} \begin{pmatrix} i & 1
                                        \\ 1 & i \end{pmatrix}$$
    where we use one standard interpretation of the beam splitter \cite{henault2015}.
\end{definition}

The Mach-Zehnder interferometer is obtained as the following composition:
$$\scalebox{0.8}{\tikzfig{figures/mzi}}$$
The classical interpretation of this diagram is then given by:
$$ \tt{MZI}(\alpha, \beta) = ie^{i \alpha}
\begin{pmatrix} - e^{i\beta} \tt{sin}(\alpha) & \tt{cos}(\alpha)
        \\ e^{i\beta} \tt{cos}(\alpha) & \tt{sin}(\alpha) \end{pmatrix}$$

MZIs may be used to parametrize any unitary map on $m$ modes. They are the basic
building blocks of integrated nanophotonic circuits currently being produced \cite{corrielli2021}.
The first architecture for a universal multiport interferometer was proposed
by Reck et al. \cite{reck1994} and consists of a mesh of MZIs. It was later simplified
by Clements et al. into a grid-like architecture, reducing the depth from
$d = 2m - 3$ to $d = m$ and thus the probability of photon loss \cite{clements2016}.

\begin{center}
    $$\scalebox{0.8}{\tikzfig{figures/photonic-chip}}$$
\end{center}
Using one of these architectures, we have a parametrized circuit
$c(\theta): m \to m \in \bf{LO}$, where the parameters $\theta$
correspond to phases $\alpha, \beta$ of the Mach-Zehnder interferometers
making up the chip. As shown in both\cite{reck1994} and \cite{clements2016},
for any unitary $U: \bb{C}^m \to \bb{C}^m$,
there is a configuration of parameters $\theta$ such that
$\cal{U}(c(\theta)) = U$. We may restate their results in our notation.

\begin{proposition}[Universality] \cite{reck1994,clements2016}
    For any $m \times m$ unitary $U$, there is a circuit $c: m \to m$ in
    $\bf{LO}$ such that $\cal{U}(c) = U$.
\end{proposition}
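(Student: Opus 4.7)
The plan is to invoke the standard constructive decomposition of $U(m)$ into two-mode rotations, and then verify that each elementary two-mode rotation and the residual diagonal phases can be realised in $\bf{LO}$. First, I would argue that any $m \times m$ unitary admits a decomposition as a product of Givens-style two-mode unitaries acting on pairs of adjacent modes, together with a final diagonal unitary of phases. This is a unitary analogue of $QR$ decomposition: one iteratively chooses a two-mode rotation that zeroes out a prescribed off-diagonal entry while preserving unitarity, and the remaining matrix has strictly more zeros. The Reck scheme arranges these rotations in a triangular mesh of depth $2m - 3$, the Clements scheme in a rectangular mesh of depth $m$; termination follows by an easy induction on $m$.

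Second, I would show that any element of $U(2)$ acting on two adjacent modes can be realised as an MZI dressed by single-mode phase shifts. The explicit formula for $\tt{MZI}(\alpha, \beta)$ already provides a two-real-parameter family, and pre- or post-composing with $\tt{S}(\gamma) \otimes \tt{S}(\delta)$ yields a four-real-parameter family, exactly matching the dimension of $U(2)$. Inverting the map from $(\alpha,\beta,\gamma,\delta)$ to the four entries of a prescribed target unitary is a routine trigonometric calculation; the only subtlety is handling degenerate cases where $\cos(\alpha) = 0$, which correspond to two-mode unitaries diagonal in the swap basis and can be handled directly.

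Third, any residual diagonal unitary $\mathrm{diag}(e^{i\phi_1}, \ldots, e^{i\phi_m})$ is realised directly by the parallel composition $\tt{S}(\phi_1) \otimes \cdots \otimes \tt{S}(\phi_m)$, whose image under $\cal{U}$ is precisely that diagonal matrix. Composing these building blocks in the order dictated by the decomposition and invoking functoriality of $\cal{U}$ (which takes sequential composition of circuits to matrix product and parallel composition to direct sum, i.e.\ block-diagonal embedding) yields a circuit $c : m \to m$ in $\bf{LO}$ with $\cal{U}(c) = U$. The main obstacle is purely bookkeeping: verifying that the adjacency constraint imposed by the physical architecture is respected at every step of the elimination, so that each two-mode unitary in the decomposition really does act on neighbouring wires and can be inserted as an MZI block; this is exactly what the Reck and Clements constructions take care of, so the content of this proof is a faithful translation of their explicit circuits into the compositional syntax of $\bf{LO}$.
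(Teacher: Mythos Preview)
Your proposal is correct and is essentially the content of the cited Reck and Clements constructions. Note, however, that the paper does not give its own proof of this proposition: it merely restates the result and attributes it to \cite{reck1994,clements2016}, so there is no independent argument in the paper to compare against. What you have written is a faithful outline of those original proofs translated into the $\bf{LO}$ syntax, which is exactly what the paper implicitly defers to.
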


In classical light experiments, we can measure the energy or \emph{intensity}
of an electromagnetic wave $E = E_0 e^{i(kx - \omega t)}$ where $k$ is the wavenumber,
$\omega$ is the angular frequency and $E_0$ is called the amplitude \cite{griffiths1962}.
The intensity is then given by the quadratic quantity $I = \frac{c}{2} \epsilon_0 E_0^2$
where $\epsilon_0$ is the permittivity of free space and $c$ is the speed of light.
The intensity is thus proportional to the Born rule $I \propto \norm{E}^2$.
Using the Born rule and the classical interpretation of $\bf{LO}$, we may compute the
output distribution of a photonic chip $c \in \bf{LO}$ with $m$ spatial modes,
when the input is a classical or \emph{incoherent} beam of light.
Suppose the input intensities of light are $I \in \bb{R}^m$. One may assume
$\sum_{i= 1}^m I_i = 1$.
Then the intensities $J$ at the output of an interferometer $c \in \bf{LO}$ are given by:
$$ J = \norm{\cal{U}(c)}^2 I$$
where juxtaposition denotes matrix multiplication and the norm squared $\norm{.}^2$ is
applied entry-wise. Note that $\norm{\cal{U}(c)}^2$ is a doubly stochastic
matrix since $\cal{U}(c)$ is unitary.

\begin{example}[Classical light]
    The intensities at the output of the beam splitter $\tt{BS}$
    on any normalised input $I$  are $J = (\frac{1}{2}, \frac{1}{2})$ since:
    $$ \norm{\tt{BS}}^2 = \begin{pmatrix} \frac{1}{2} & \frac{1}{2}
                            \\ \frac{1}{2} & \frac{1}{2} \end{pmatrix}$$
    The Mach-Zehnder interferometer yields the following stochastic matrix:
    $$ \norm{\tt{MZI}(\alpha, \beta)}^2 =
    \begin{pmatrix} \tt{sin}(\alpha)^2 & \tt{cos}(\alpha)^2
                 \\ \tt{cos}(\alpha)^2 & \tt{sin}(\alpha)^2 \end{pmatrix}$$
    The reflection and transmission coefficients for light intensities are
    given by $R = \tt{sin}(\alpha)^2$ and
    $T = \tt{sin}(\alpha)^2$ with $R + T = 1$.
    Thus, if we input a beam of incoherent light on the left leg $I = (1, 0)$,
    we will observe the distribution $J = (R, T)$ in the output.
\end{example}

We have seen that linear optical circuits have a classical interpretation as
complex-valued matrices. We now give a graph-theoretic interpretation
of these circuits, using a syntactic category for counting paths.
The classical $\bf{Path}$ calculus has the following generators:
\begin{equation}\label{path-generators}
    \scalebox{0.8}{\tikzfig{figures/path-generators}}
\end{equation}
denoted respectively $\delta$, $\epsilon$, $\mu$, $\eta$, $\sigma$ and $r$.
$\bf{Path}$ diagrams are obtained by composing these generators horizontally or
vertically. Two $\bf{Path}$ diagrams are equal if we can rewrite from one
to the other using the rules defined in Figure (\ref{path-classical}).

\begin{figure}[H]
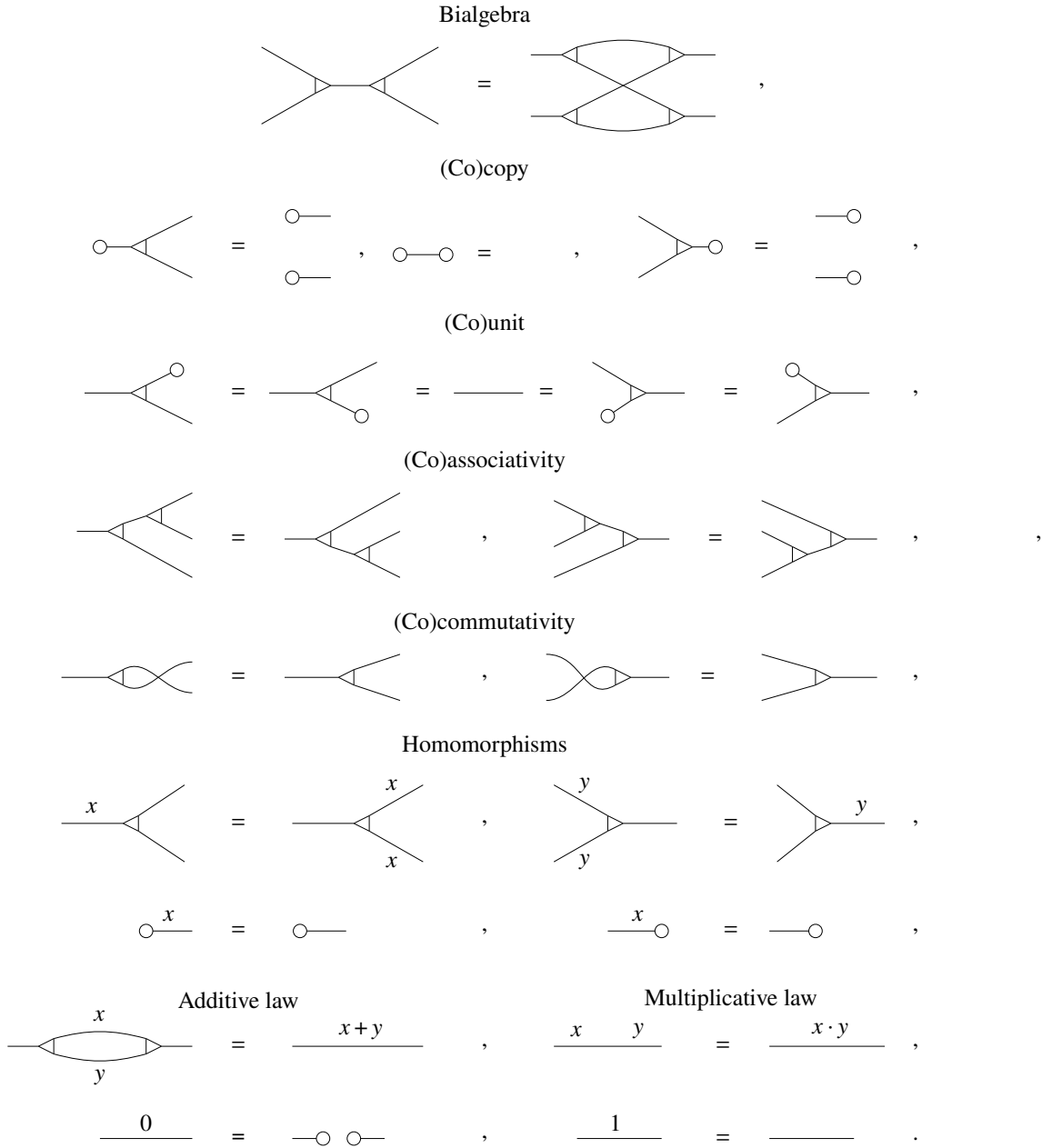

    \centering
    \scalebox{0.9}{\tikzfig{figures/path-classical}}
    \caption{Axioms of the $\bf{Path}$ calculus}
    \label{path-classical}
\end{figure}

In categorical terms, we may define $\bf{Path}$ as the PROP generated by a bialgebra
$(\delta, \epsilon, \mu, \eta)$ together with endomorphisms
$r: 1 \to 1$ with a semiring structure $r \in \bb{S}$.
Throughout this paper we fix
$\bb{S} = \bb{C}$ although our main results can be generalised to any semiring.
This calculus is folklore in category theory and was first studied by Pirashvili
\cite{pirashvili2001}. Bonchi, Sobocinski and Zanasi used it to model signal flow graphs \cite{bonchi2014}.
We can interpret $\bf{Path}$ in the monoidal category of matrices with direct sum.

\begin{proposition}
    There is a monoidal functor $\cal{C} : \bf{Path} \to \bf{Mat}_\oplus$.
\end{proposition}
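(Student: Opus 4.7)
The plan is to invoke the universal property of the free PROP presentation of $\bf{Path}$: any assignment of the generators $\delta, \epsilon, \mu, \eta, \sigma, r$ to morphisms in a symmetric monoidal target category that satisfies the defining relations of Figure \ref{path-classical} extends uniquely to a strict monoidal functor. So the proof reduces to exhibiting such an assignment into $\bf{Mat}_\oplus$ and then checking the axioms by direct matrix computation. The resulting functor realises the graph-theoretic slogan announced in the surrounding text: $\cal{C}(d)$ is the matrix whose $(j,i)$ entry is the sum over paths from input $i$ to output $j$ of the diagram $d$, weighted by the scalars $r$ along the path.

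On objects the assignment is forced by strict monoidality with respect to $\oplus$: set $\cal{C}(n) = \bb{C}^n$. On generators I take the standard bialgebra structure on $\bb{C}$ in $\bf{Mat}_\oplus$, namely $\cal{C}(\delta)$ the column $(1,1)^T : \bb{C} \to \bb{C}^2$, $\cal{C}(\mu)$ the row $(1,1) : \bb{C}^2 \to \bb{C}$, $\cal{C}(\epsilon)$ and $\cal{C}(\eta)$ the unique empty matrices between $\bb{C}$ and the zero space $\bb{C}^0$, $\cal{C}(\sigma)$ the $2 \times 2$ swap matrix, and $\cal{C}(r) = (r)$ the $1 \times 1$ scalar matrix.

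Verification of the relations is then routine matrix algebra. The (co)unit and (co)associativity laws for $(\delta, \epsilon)$ and $(\mu, \eta)$ reduce to elementary identities involving the all-ones vector, while (co)commutativity follows from invariance of $(1,1)$ under the swap. The bialgebra compatibility $\delta \circ \mu = (\mu \oplus \mu) \circ (\text{id} \oplus \sigma \oplus \text{id}) \circ (\delta \oplus \delta)$ evaluates to the all-ones $2 \times 2$ matrix on both sides. The semiring axioms on scalar endomorphisms are immediate from the ring operations on $1 \times 1$ matrices: sequential composition realises multiplication, and $\mu \circ (r \oplus s) \circ \delta$ realises addition.

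The main potential obstacle is not conceptual but combinatorial, namely keeping careful track of the block-diagonal structure of $\oplus$ in the bialgebra equation, where the interchange $\text{id} \oplus \sigma \oplus \text{id}$ reshuffles four wires before the block-diagonal copies of $\mu$ and $\delta$ act. Once this bookkeeping is carried out, functoriality (preservation of sequential and parallel composition) follows automatically from the universal property of the free presentation, completing the construction of $\cal{C}$.
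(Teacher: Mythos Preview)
Your proposal is correct and follows essentially the same approach as the paper: define $\cal{C}$ on generators by the standard bialgebra structure of $\bb{C}$ in $\bf{Mat}_\oplus$ and verify the relations of Figure~\ref{path-classical} by direct matrix computation, invoking the universal property of the free PROP. The only difference is a harmless transpose convention --- the paper takes $\cal{C}(\delta)=(1\ 1)$ as a row and $\cal{C}(\mu)=(1\ 1)^T$ as a column, whereas you swap these --- and you spell out the universal-property justification and the individual axiom checks more explicitly than the paper, which simply asserts they are easy.
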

\begin{proof}
    $\cal{C}$ is given on objects by $\cal{C}(a) = 1$ and on the generators
    (\ref{path-generators}) by:
    $$ \cal{C}(\delta) = \begin{pmatrix} 1 \\ 1 \end{pmatrix} \, , \quad
    \cal{C}(\epsilon) = () \, , \quad
    \cal{C}(\mu) = \begin{pmatrix} 1 & 1 \end{pmatrix} \, , \quad
    \cal{C}(\eta) = () \, , \quad
    \cal{C}(r) = \begin{pmatrix} r \end{pmatrix} \, , \quad
    \cal{C}(\sigma) = \begin{pmatrix} 0 & 1 \\ 1 & 0 \end{pmatrix} \, .$$
    where $\cal{C}(\epsilon) = () : 1 \to 0$ and $C(\eta) = () : 0 \to 1$ are the unique morphisms of that type in $\bf{Mat}_\oplus$.
    It is easy to check that all the relations in Figure \ref{path-classical}
    are satisfied by $\cal{C}$.
\end{proof}

Moreover, there is a functor turning linear optical gates into $\bf{Path}$ diagrams,
representing their underlying matrix:

$$\scalebox{0.9}{\tikzfig{figures/path-optics}}$$

\begin{proposition}
    The classical interpretation of linear optics factors through the Path calculus,
    i.e.~the functor $F: \bf{LO} \to \bf{Path}$ defined above satisfies
    $\U = \cal{C} \circ F$.
\end{proposition}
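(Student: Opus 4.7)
The plan is to exploit the fact that $\mathbf{LO}$ is freely generated as a monoidal category by $\tt{BS}$ and $\tt{S}(\alpha)$. Both $\mathcal{U}$ and the composite $\mathcal{C} \circ F$ are monoidal functors from $\mathbf{LO}$ to $\mathbf{Mat}_\oplus$, so by the universal property of the free construction it suffices to verify the identity $\mathcal{U} = \mathcal{C} \circ F$ on these two generators; the case of a general circuit then follows by induction on the composition structure.

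For the phase shift, the functor $F$ sends $\tt{S}(\alpha)$ to a single wire decorated with the scalar $r = e^{i\alpha}$, and $\mathcal{C}(r) = (e^{i\alpha})$ by definition, matching $\mathcal{U}(\tt{S}(\alpha))$ immediately. For the beam splitter, one unpacks the $\mathbf{Path}$ decomposition indicated in the figure defining $F$: it consists of a pair of co-multiplications $\delta$ on the two input wires followed by a pair of multiplications $\mu$ onto the two output wires, with the four intermediate edges labelled by the scalars $\frac{i}{\sqrt 2}$ (on the direct edges) and $\frac{1}{\sqrt 2}$ (on the crossed edges). Applying $\mathcal{C}$ turns $\delta$ and $\mu$ into the row and column vectors $(1\ 1)$ and $\binom{1}{1}$, so each matrix entry of $\mathcal{C}(F(\tt{BS}))$ is the sum over paths from a given input to a given output of the product of the scalars encountered, yielding exactly $\frac{1}{\sqrt 2}\bigl(\begin{smallmatrix} i & 1 \\ 1 & i \end{smallmatrix}\bigr) = \mathcal{U}(\tt{BS})$.

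The only real content is therefore the second bullet: a routine but concrete check that the path-counting semantics on the chosen $\mathbf{Path}$-decomposition of $\tt{BS}$ reproduces the standard beam splitter matrix. The principal obstacle is bookkeeping — making sure the orientations of $\delta$ and $\mu$ and the placement of the four scalar edges in the figure are read off correctly so that the cross-terms and diagonal terms of the resulting $2 \times 2$ matrix land in the right positions; once this is done, monoidal functoriality propagates the equality to every circuit built from sequential and parallel composition of the two generators.
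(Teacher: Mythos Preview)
The paper does not actually give a proof of this proposition; it is stated immediately after the figure defining $F$ and the text moves on. Your argument is correct and is precisely the expected one: since $\mathbf{LO}$ is freely generated as a monoidal category by $\tt{BS}$ and $\tt{S}(\alpha)$, two monoidal functors out of it agree as soon as they agree on generators, and the check on generators is the direct computation you describe. (Note only that the paper writes the composite in diagrammatic order as $F \circ \mathcal{C}$, whereas you write $\mathcal{C} \circ F$; the intended functor is the same.)
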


The rewrite rules of $\bf{Path}$ allow to reduce any diagram to a normal form,
which carries the same data as a weighted bipartite graph. This normal form can
be reached by the following (pseudo) algorithm:
\begin{enumerate}
    \item remove all possible instances of $\eta: 0 \to 1$ and $\epsilon: 1 \to 0$
    by using the (co)unit and (co)copy laws repeatedly.
    \item apply the bialgebra law, together with homomorphism and multiplicative laws, until all instances of the comonoid $\delta$ precede all instances of the monoid $\mu$,
    \item apply the additive rule to contract parallel edges.
\end{enumerate}
As an example, the following equation holds in $\bf{Path}$, the normal form
procedure going from left to right.
$$\tikzfig{figures/path-normal-form}$$
where the thick wires carry the endomorphism $2: 1 \to 1$.
Computation of the weights on the resulting graph is equivalent to the
block-diagonal matrix multiplication defined by $\cal{C}$. This is stated
formally as the following result.

\begin{proposition}[Completeness]
    The axioms of $\bf{Path}$ are complete for $\bf{Mat}_\oplus$,
    i.e. $\cal{C}: \bf{Path} \to \bf{Mat}_\oplus$ is a monoidal equivalence.
\end{proposition}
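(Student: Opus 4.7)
The plan is to show that the strict monoidal functor $\cal{C}$ is bijective on objects and fully faithful. Bijectivity on objects is immediate: both categories have object set $\bb{N}$, and $\cal{C}(a^{\otimes n}) = n$.

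For fullness, I would exhibit, for every matrix $M : n \to m$ in $\bf{Mat}_\oplus$, a canonical ``spider-like'' diagram $D_M$ with $\cal{C}(D_M) = M$. Concretely: split each of the $n$ input wires into $m$ copies by iterating $\delta$; decorate the wire from input $j$ to output $i$ with the scalar $r = M_{ij}$; use $\sigma$s to regroup the $n$ wires targeting each output; then merge them by iterating $\mu$. Functoriality of $\cal{C}$ together with the matrices assigned to $\delta, \mu, r, \sigma$ reconstructs $M$ entry by entry via block-diagonal matrix multiplication.

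Faithfulness is the core of the statement, and I would obtain it by verifying that the normal-form pseudo-algorithm stated just before the proposition actually rewrites an arbitrary $D : n \to m$ to a diagram of exactly the shape $D_M$ above. Step~(1) eliminates every $\eta$ and $\epsilon$ using the (co)unit laws, absorbing the accompanying $\delta$s and $\mu$s. Step~(2) applies the bialgebra law, the (co)associativity and (co)commutativity of $(\delta,\mu)$, and the naturality of $\sigma$ encoded in the swap axioms of Figure~\ref{path-classical}, to push all $\delta$s to the left of all $\mu$s, leaving in the middle only a permutation of wires decorated with scalars. Step~(3) fuses parallel edges by the additive rule, producing a single scalar $w_{ij}$ between input $j$ and output $i$. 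The resulting normal form is a diagram of the form $D_M$, and functoriality of $\cal{C}$ forces $w_{ij} = \cal{C}(D)_{ij}$. Consequently $\cal{C}(D_1) = \cal{C}(D_2)$ implies $D_1$ and $D_2$ share the same normal form, hence are equal in $\bf{Path}$.

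The main obstacle is the rigorous verification of Step~(2): one must argue termination of a rewriting procedure that straightens an arbitrary interleaving of $\delta$s, $\mu$s and $\sigma$s into a single $\delta$-layer followed by a single $\mu$-layer. This is the standard completeness argument for the PROP of bialgebras (compare Lack's composition of PROPs), and the axioms of Figure~\ref{path-classical} are exactly what is needed to carry it out; the scalar generators $r$ are transported along the rewrites using only the additive and multiplicative rules and therefore pose no further difficulty.
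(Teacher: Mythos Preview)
Your proposal is correct and follows exactly the approach the paper gestures at. Note, however, that the paper does not actually supply a proof of this proposition: it states the result, describes the three-step normal-form pseudo-algorithm immediately beforehand, and then remarks that the calculus is folklore from Sobocinski's \emph{graphical linear algebra} blog. Your argument is precisely a fleshing-out of that pseudo-algorithm into a full proof of monoidal equivalence, with the added observation that Step~(2) is the standard bialgebra-PROP completeness argument (Lack). So there is no divergence in approach; you have simply written the proof the paper chose to omit.
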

\begin{proof}
    The normalisation procedure is described above, see also \cite[Proposition 1]{bonchi2014}.
\end{proof}

\section{Fock space and permanents}\label{sec-boson}

Processing bosonic particles, such as photons, with linear optical devices
gives rise to statistics that are hard to simulate classically \cite{aaronson2010}.
In this section we give an interpretation of linear optical circuits, derived
from \cite{vicary2008}, in terms of free and symmetric Fock space functors
$\F, \B : \bf{Mat}_\oplus \to \bf{Vect}_\otimes$.
We show that this characterisation is equivalent to the model introduced in \cite{aaronson2010}.

Consider a box containing particles. Assume that the space of states of a single particle
is given by a Hilbert space $H$. The \emph{free} Fock space is defined as follows:
$$\F(H) = \bigoplus_{n = 0}^\infty H^{\otimes n} $$
where $\otimes$ is the usual tensor product and $\oplus$ the direct sum.
$\F(H)$ describes the state space of a given number of \emph{distinguishable}
particles indexed by $n$.
Given a basis $X$ of \emph{modes} such that $H = \bb{C}X$, we have that:
$$\F(\bb{C}X) \simeq l^2(X^\ast)$$
where $\bb{C}X$ denotes the free vector space space with basis $X$,
$X^\ast$ is the set of lists over $X$ and $l^2$ is the canonical Hilbert space construction defined in \cite{heunen2013}. Thus for $n$ particles in $m$ modes we have
$\F_n(\bb{C}^m) = (\bb{C}^m)^{\otimes n} \simeq \bb{C}([m]^n)$
the basis states $[m]^n$ are given by lists of length $n$ using $m$ distinct symbols.

\begin{proposition}
   The free Fock space can be extended to a functor $\F : \bf{Mat}_\oplus \to \bf{Vect}_\otimes$ defined on the $n$-particle sector by
   $\F_n(A) = A^{\otimes n}$ for matrices $A$.
\end{proposition}
\begin{proof}
    This follows by functoriality of tensor $\otimes$ and biproduct $\oplus$.
\end{proof}

Now suppose that the particles in the box are \emph{indistinguishable}. The state
space of the system will then be described by the symmetric or \emph{bosonic}
Fock space, defined as follows:
$$\B(H) = \bigoplus_{n = 0}^\infty H^{\botimes n} $$
where $\botimes$ is the quotient of the tensor product by the equivalence relation
$x \botimes y = y \botimes x$, which ensures that the bosons are indistinguishable.
One may show that $\B(\bb{C}X) \simeq l^2(\bb{N}^X)$, i.e. the bosonic Fock space
over a set of modes $X$ is spanned by the basis states of occupation numbers.
The $n$-particle sector of the bosonic Fock space $\B_n(H)$ is the
$n$-th component in the direct sum above.
When $H = \bb{C}^m$ has dimension $m$, we have $n$ indistinguishable particles
in $m$ possible modes.
The basis states of $\B_n(H)$ are given by:
$$ \Phi_{m, n} = \set{(s_1, \dots, s_m)\, \vert\, \sum_{i=1}^m s_i = n \, , \, s_i \in \bb{N}} \, \sub \, \bb{N}^m$$
Note that $\vert \Phi_{m, n} \vert = \binom{m + n - 1}{n}$ and
$\B_n(H) =  H^{\botimes n} = \bb{C}(\Phi_{m, n})$.
Let us compare the basis states for distinguishable and bosonic particles.
There is a family of linear maps $\alpha_H: \F(H) \to \B(H)$ defined on the
basis states of the $n$-particle sector $X \in [m]^n$ by:
$$ \alpha(X) = \sqrt{\frac{n!}{\prod_{j = 1}^m a(X)_j!}} \, \ket{a(X)}$$
where $a : [m]^n \to \Phi_{m, n}$ is defined by
$a(X)_j= \size{\set{i \vert X_i = j}}$ for $j \in [m]$. Note that the normalisation
factor is equal to the size of the pre-image $a^{-1}(a(X))$.
Let us write the map $\alpha^\dagger$ explicitly:
$$ \alpha^\dagger \ket{I} = \sqrt{\frac{N_I}{n!}}\sum_{X \in a^{-1}(I)} \ket{X}$$
where $N_I = \prod_{j=1}^m I_j!$.
We can now use $\alpha$ to define the action of $\B$ on arrows.

\begin{proposition} \cite{vicary2008}
   The bosonic Fock space can be extended to a strong monoidal functor $\B : \bf{Mat}_\oplus \to \bf{Vect}_\otimes$
   defined on arrows $A : m \to k$ by:
   $$\B_n(A) \, = \, A^{\botimes n}\, =\, \alpha A^{\otimes n} \alpha^\dagger$$
   and satisfying $\B(A \oplus B) = \B(A) \otimes \B(B)$.
\end{proposition}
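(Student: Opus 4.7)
The plan is to verify two things: that $\B$ is functorial on $\bf{Mat}_\oplus$, and that it is monoidal sending $\oplus$ to $\otimes$.

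The first step is to pin down the behaviour of $\alpha$ itself. I would show that $\alpha$ is a coisometry, $\alpha\alpha^\dagger = \mathrm{id}_{\B(H)}$, and that $\alpha^\dagger \alpha = P_\mathrm{sym}$ is the orthogonal projection onto the symmetric subspace of $\F_n(H)$, where $P_\mathrm{sym} = \frac{1}{n!}\sum_{\sigma \in S_n} \sigma$ and $S_n$ acts by permutation of tensor factors. Both identities reduce to a direct calculation on the occupation-number basis $\ket{I} \in \Phi_{m,n}$, tracking the combinatorial factor $\vert a^{-1}(I) \vert = n!/\prod_j I_j!$ together with the normalisations $\sqrt{n!/\prod_j a(X)_j!}$ and $\sqrt{N_I/n!}$ appearing in the definitions of $\alpha$ and $\alpha^\dagger$.

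Given these two identities, functoriality follows. For composable $A: m \to k$ and $B: k \to l$ in $\bf{Mat}_\oplus$, the map $A^{\otimes n}$ commutes with every permutation $\sigma \in S_n$, hence with $P_\mathrm{sym} = \alpha^\dagger \alpha$. In particular $A^{\otimes n}$ preserves the symmetric subspace, so $A^{\otimes n} \alpha^\dagger$ already lands in $\mathrm{Im}(\alpha^\dagger)$ and we may insert $\alpha^\dagger \alpha$ without changing it:
\[ \B_n(B) \circ \B_n(A) \;=\; \alpha B^{\otimes n} \alpha^\dagger \alpha A^{\otimes n} \alpha^\dagger \;=\; \alpha B^{\otimes n} A^{\otimes n} \alpha^\dagger \;=\; \alpha (BA)^{\otimes n} \alpha^\dagger \;=\; \B_n(BA), \]
and preservation of identities is immediate from $\alpha\alpha^\dagger = \mathrm{id}$. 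Summing over $n$ gives a well-defined functor $\B$.

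For monoidality, the key ingredient is the bosonic exponential law: for each $n$,
\[ \B_n(H \oplus K) \;\cong\; \bigoplus_{i+j=n} \B_i(H) \otimes \B_j(K), \]
obtained by splitting each occupation vector into its $H$-component and $K$-component. Summing over $n$ assembles these into a canonical isomorphism $\B(H \oplus K) \cong \B(H) \otimes \B(K)$. It then remains to check that this isomorphism intertwines $\B(A \oplus B)$ with $\B(A) \otimes \B(B)$, which follows because $(A \oplus B)^{\otimes n}$ is already block-diagonal with respect to the partition $i+j = n$ after symmetrisation, so each sector on the left is sent to the matching sector on the right.

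I expect the main obstacle to be the opening combinatorial step: establishing $\alpha^\dagger \alpha = P_\mathrm{sym}$ with the chosen normalisations, since the prefactors must conspire to reproduce exactly the $1/n!$ of the symmetriser, including the cross terms between distinct preimages $X, X' \in a^{-1}(I)$. Once this bookkeeping is done, functoriality and monoidality follow essentially formally from the fact that tensor powers of linear maps commute with the symmetric-group action.
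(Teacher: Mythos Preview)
Your proposal is correct and in fact more careful than the paper's own argument. The paper gives only a one-line justification, ``This follows from $\alpha^\dagger \circ \alpha = \mathtt{id}$, i.e.\ $\alpha^\dagger$ is an isometry,'' and then defers to \cite{vicary2008}. As written that identity is the wrong way round: since $\alpha : \F(H) \to \B(H)$ goes from the larger space to the smaller one, one has $\alpha\alpha^\dagger = \mathrm{id}_{\B}$, while $\alpha^\dagger\alpha$ is a nontrivial projector on $\F$. You identify this projector as $P_{\mathrm{sym}}$ and use that $A^{\otimes n}$ commutes with the $S_n$-action, which is exactly the missing step needed to cancel the middle $\alpha^\dagger\alpha$ in the composition $\alpha B^{\otimes n}\alpha^\dagger\alpha A^{\otimes n}\alpha^\dagger$.

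So the two arguments are the same in spirit, but your version actually closes the gap that the paper's one-liner leaves open (and that is presumably handled inside the cited reference). Your treatment of monoidality via the bosonic exponential law $\B_n(H\oplus K) \cong \bigoplus_{i+j=n}\B_i(H)\otimes\B_j(K)$ is likewise standard and correct; the paper does not spell this out at all. Your anticipated obstacle---verifying $\alpha^\dagger\alpha = P_{\mathrm{sym}}$ with the given normalisations---is genuine bookkeeping but routine once you use $\lvert a^{-1}(I)\rvert = n!/\prod_j I_j!$.
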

\begin{proof}
    Functoriality follows from naturality of $\alpha^\dagger \alpha$ \cite[Lemma 6.6]{vicary2008}. $\B$ is moreover strong monoidal:
    $$ \B(\bb{C}X \oplus \bb{C}Y) \simeq \B(\bb{C}(X + Y))
    \simeq l^2(\bb{N}^{X + Y})
    \simeq l^2(\bb{N}^X \times \bb{N}^Y)
    \simeq \B(\bb{C}X) \otimes \B(\bb{C}Y)$$
\end{proof}

We can use the bosonic Fock space to define a functorial model for linear optics.

\begin{definition}[Functorial model]
    The functorial interpretation of linear optics is given by the composition
    $\B : \bf{LO} \xto{\U} \bf{Mat}_\oplus \xto{\B} \bf{Vect}_\otimes$.
    Given a chip $c: m \to m \in \bf{LO}$, the probability of observing output state
    $J \in \Phi_{m, n}$ on input $I \in \Phi_{m, n}$ is given by:
    $$ P_c^\B(J \vert I) = \norm{\bra{J}\B(c)\ket{I}}^2 = \norm{\bra{J} \alpha \, \U(c)^{\otimes n}\, \alpha^\dagger \ket{I}}^2$$
\end{definition}

Aaronson and Arkhipov \cite{aaronson2010} introduced a formal model for linear
optics based on matrix permanents.

\begin{definition}[Permanent model \cite{aaronson2010}]
    Given a chip $c: m \to m \in \bf{LO}$, the probability of observing output state
    $J \in \Phi_{m, n}$ on input $I \in \Phi_{m, n}$ is given by:
    $$ P_c(J \vert I) = \frac{1}{N_I N_J} \norm{\mathtt{Perm}(\U(c)_{I, J})}^2$$
    where $N_S = \prod_{j=1}^m S_j!$, $\tt{Perm}$ denotes the matrix permanent,
    and $U_{I, J}$ is the $n \times n$ matrix obtained from an $m \times m$
    matrix $U$ as follows.
    We first construct the $m \times n$ matrix $U_J$ by taking $J_j$ copies of
    the $j$th column of $U$ for each $j \leq m$. Then we construct $U_{I, J}$
    by taking $I_i$ copies of the $i$th row of $U_J$.
\end{definition}

We give an explicit proof that the models introduced above are equivalent, although
the argument can be traced back to Fock \cite{fock1932}.

\begin{theorem}
    The functorial model of linear optics is equivalent to the permanent model.
    Explicitly, for any $m \times m$ unitary $U$ and basis states $I, J \in \Phi_{m, n}$
    $$ \bra{J} \B(U) \ket{I} = \frac{\mathtt{Perm}(U_{I, J})}{\sqrt{N_I N_J}}$$
\end{theorem}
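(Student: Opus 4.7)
The plan is to unfold both sides of the claimed identity and match them via the row- and column-permutation symmetries of the matrix permanent; no new ideas are needed beyond the definitions, so the argument is essentially a bookkeeping exercise tying bosonic Fock-space combinatorics to the symmetric expansion of the permanent. Concretely, I would first expand the left-hand side using $\B(U) = \alpha U^{\otimes n} \alpha^\dagger$ together with the explicit formula $\alpha^\dagger \ket{I} = \sqrt{N_I / n!}\sum_{Y \in a^{-1}(I)}\ket{Y}$ and its analogue on the $J$ side. Since $\bra{X}U^{\otimes n}\ket{Y} = \prod_{k=1}^n \bra{X_k}U\ket{Y_k}$, this yields
$$\bra{J}\B(U)\ket{I} \;=\; \frac{\sqrt{N_I N_J}}{n!}\sum_{X \in a^{-1}(J)}\sum_{Y \in a^{-1}(I)} \prod_{k=1}^n \bra{X_k}U\ket{Y_k}.$$

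The right-hand side unfolds via $\mathrm{Perm}(M) = \sum_{\sigma \in S_n}\prod_k M_{k, \sigma(k)}$ applied to $M = U_{I,J}$, producing a sum of $n!$ products of entries of $U$ indexed by fixed ordered representatives $X_0 \in a^{-1}(I)$, $Y_0 \in a^{-1}(J)$ and a permutation $\sigma$. The crux of the proof is then the identity
$$\sum_{X \in a^{-1}(J)}\sum_{Y \in a^{-1}(I)} \prod_{k=1}^n \bra{X_k}U\ket{Y_k} \;=\; \frac{n!\,\mathrm{Perm}(U_{I,J})}{N_I N_J},$$
which I would establish by two orbit-stabiliser moves. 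For fixed $Y$, the inner sum over $X$ is $S_n$-invariant: permuting $Y$ can be absorbed into a relabelling of the $X$-indices, so the inner sum depends only on $a(Y) = I$; this collapses the outer sum over $Y$ to $|a^{-1}(I)| = n!/N_I$ copies of its value at a single representative $Y_0$. The remaining sum over an $S_n$-orbit is then matched with the permanent by observing that each term in $\sum_\sigma \prod_k U_{X_0(k), Y_0(\sigma(k))}$ visits every element of the orbit exactly $N_J$ times (the stabiliser size), giving $\mathrm{Perm}(U_{I,J})/N_J$. Substituting back cancels the $n!$ and leaves $\mathrm{Perm}(U_{I,J})/\sqrt{N_I N_J}$, as required.

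The main obstacle is purely combinatorial: one must juggle the factorial weights $N_I, N_J$, the orbit sizes $n!/N_I$ and $n!/N_J$, and the row-versus-column labelling conventions implicit in the definition of $U_{I,J}$. Once those conventions are pinned down, the identity is a mechanical consequence of the invariance of permanents under row and column permutations, and the Fock-space amplitude becomes manifestly equal to Aaronson and Arkhipov's permanent formula.
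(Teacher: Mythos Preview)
Your proposal is correct and follows essentially the same route as the paper: expand $\bra{J}\B(U)\ket{I}$ via the explicit formula for $\alpha^\dagger$, then use the $S_n$-invariance of the resulting double sum to collapse one summation to a multiplicity factor $n!/N_I$ and identify the remaining orbit sum with $\mathrm{Perm}(U_{I,J})/N_J$. Apart from a harmless slip in calling the $X$-sum ``inner'' while it is written as the outer one, your bookkeeping matches the paper's line-by-line computation.
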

\begin{proof}
    We start by expanding the left-hand side:
    \begin{align*}
        \bra{J} \B(U) \ket{I} &= \bra{J} U^{\botimes n} \ket{I} = \bra{J} \alpha U^{\otimes n} \alpha^\dagger \ket{I}
        = (\alpha^\dagger \ket{J})^\dagger U^{\otimes n} (\alpha^\dagger \ket{I})\\
        &= \left( \sqrt{\frac{N_J}{n!}} \sum_{Y \in a^{-1}(J)} \bra{Y} \right) U^{\otimes n} \, \left(\sqrt{\frac{N_I}{n!}} \sum_{X \in a^{-1}(I)} \ket{X} \right)\\
        &= \frac{\sqrt{N_J N_I}}{n!} \size{a^{-1}(I)} \sum_{Y \in a^{-1}(J)} \bra{Y} U^{\otimes n} \ket{\hat{X}}\\
        &=  \frac{\sqrt{N_J N_I}}{n!} \frac{n!}{N_I} \frac{1}{N_J} \sum_{\sigma \in S_n} \prod_{i = 1}^n U_{\hat{X}_i, \hat{Y}_{\sigma(i)}}\\
        &=  \frac{1}{\sqrt{N_I N_J}} \mathtt{Perm}(U_{I, J})
    \end{align*}
    where $\hat{X} \in a^{-1}(I)$ and $\hat{Y} \in a^{-1}(J)$ are any chosen representatives.
    Note that this choice is irrelevant since we sum over all permutations, and
    so in particular we can set $(U_{I, J})_{i j} = U_{\hat{X}_i, \hat{Y}_j}$, yielding the last step.
\end{proof}

\begin{example}[Hong-Ou-Mandel]\label{ex-hong-ou-mandel}
    Consider the matrix of the beam splitter:
    $$ U = \frac{1}{\sqrt{2}} \begin{pmatrix} i & 1 \\ 1 & i\end{pmatrix}$$
    Suppose we input one boson in each port $I = (1, 1)$.
    There are three possible outcomes $J =\, (2, 0) , \, (1, 1) , \, (0, 2)$.
    We may determine the amplitudes of the different outcomes by computing permanents:
    $$ \mathtt{Perm}\begin{pmatrix} i & i \\ 1 & 1\end{pmatrix} = 2i
    \quad \mathtt{Perm} \begin{pmatrix} i & 1 \\ 1 & i\end{pmatrix} = 0
    \quad \mathtt{Perm}\begin{pmatrix} 1 & 1 \\ i & i\end{pmatrix} = 2i$$
    The component for outcome $(1, 1)$ is $0$. We deduce that the
    probability of observing one boson in each output port is $0$. Thus
    interference ensures that the bosons bunch together at the output of the device,
    a phenomenon known as the Hong-Ou-Mandel effect.
\end{example}

\section{Quantum paths and matchings}\label{sec-qpath}

In the previous section we have shown that bosonic linear optics can be
formulated equivalently in terms of Fock space and permanents.
Aaronson and Arkhipov \cite{aaronson2010} used the second definition to show
that sampling from a linear
optical chip with bosonic particles is classically hard: if a classical
computer can compute an additive approximation of matrix permanents then the
polynomial hierarchy collapses.
While this computational definition is useful for proving complexity
results, we want to develop a diagrammatic syntax for programming linear
optical circuits.
We do this by developing a quantised calculus $\bf{QPath}$ which allows us to
compute the amplitudes of linear optical events involving multiple photons,
using simple rewrite rules.
In order to quantise the $\bf{Path}$ calculus, we add creation and
annihilation of particles as generators.
$$\mathbf{QPath} = \mathbf{Path} + \left\{ \scalebox{0.9}{\tikzfig{figures/path-bosonic}} \right\} _{n \in \bb{N}^+}$$
This yields a free monoidal category where we can represent linear optical
processes with state preparations (creation) and post-selection (annihilation).
Before developing a calculus around the $\bf{QPath}$ generators, the first thing
to note is that $\bf{QPath}$ is equivalent to $\bf{Path}$ if we interpret it
classically, i.e. functors $\bf{QPath} \to \bf{Mat}_\oplus$ are
in bijective correspondence with functors $\bf{Path} \to \bf{Mat}_\oplus$.
In fact, black and white nodes are necessarily equal in $\bf{Mat}_\oplus$,
since the unit $0$ is both a terminal and an initial object.
In order to interpret black nodes, representing modes occupied by photons,
we need to use the bosonic Fock space functor.

The quantum interpretation $\B : \bf{QPath} \to \bf{Hilb}_\otimes$ is obtained
on the Path generators (\ref{path-generators}) by composing
$\cal{C}: \bf{Path} \to \bf{Mat}_\oplus$ with the
bosonic Fock space functor $\B: \bf{Mat}_\oplus \to \bf{Hilb}_\otimes$.
The generating object $a$ of $\bf{QPath}$ is mapped to the free Hilbert space
$l^2(\bb{N})$. The comonoid $\delta : 1 \to 2$ is mapped as follows:
$$\scalebox{0.7}{\tikzfig{figures/copy}} \qquad \mapsto \qquad \B(\delta)\,:\, \ket{n} \quad \mapsto \quad \sum_{k=0}^n \binom{n}{k}^{\frac{1}{2}} \ket{k}\ket{n - k}\, ,$$
while the monoid $\mu$ is mapped to the \emph{dagger} $\B(\mu) = \B(\delta)^\dagger$.
White nodes are mapped to $\ket{0}$, $\bra{0}$, indicating that the mode is empty.
Endomorphisms $r: 1 \to 1$ in $\bf{QPath}$ are interpreted as follows:
$$\scalebox{0.8}{\tikzfig{figures/endo}} \qquad \mapsto \qquad \B(r) \,:\, \ket{n}  \quad \mapsto \quad r^n \ket{n}$$
Finally, the black nodes in $\bf{QPath}$ are mapped respectively to
$\ket{n}$ and $\bra{n}$, indicating that the mode is occupied by $n$ particles.
Hadzihasanovic \cite{hadzihasanovic2017algebra} showed directly that
$(\mu, \ket{0}, \delta, \bra{0})$ forms a bialgebra. In fact all the axioms
that hold in the classical interpretation $\cal{C}$ also hold
in the bosonic interpretation $\cal{B}$ since it is defined by functor-composition.
However, black nodes allow to express some processes which were not available
in the classical semantics, as we will see below.

The axioms of $\bf{QPath}$ include all the axioms of $\bf{Path}$, given in Figure
\ref{path-classical}.
The only additional rules we will need to reason with black nodes are the following:
\begin{figure}[H]
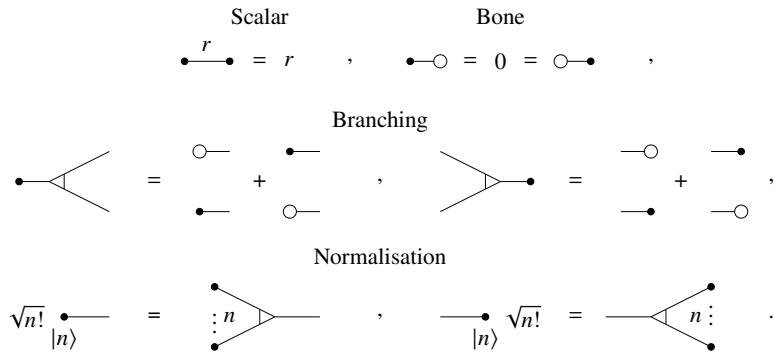

    \centering
    \scalebox{0.8}{\tikzfig{figures/path-branching}}
    \caption{Additional axioms for the $\bf{QPath}$ calculus}
    \label{path-quantum}
\end{figure}
It is easy to show that the axioms above are sound for the bosonic interpretation $\B$.

\begin{example}[Creation/Annihilation]
    The creation and annihilation operators on single modes have the following
    representation as $\bf{QPath}$ diagrams.
    $$\scalebox{0.8}{\tikzfig{figures/path-creation-annihilation}}$$
    We recover the commuting relations for these operators using the branching law:
    $$\scalebox{0.8}{\tikzfig{figures/path-commuting}}$$
\end{example}

\begin{example}[Hong-Ou-Mandel]
    We compute the amplitude of the beam splitter
    $\tt{BS}$ on input/output $I = (1, 1) = J$:
    $$\scalebox{0.7}{\tikzfig{figures/path-hong-ou-mandel}}$$
    and we recover the zero amplitude for this event.
\end{example}

We interpret a \emph{closed} diagram $d : 0 \to 0 \in \bf{QPath}$ as an event
where particle creations are matched to particle annihilations.
Given a linear optical circuit $c : m \to m \in \bf{LO}$ together with a pair of
states $I, J \in \Phi_{m, n}$ of occupation numbers, we may construct a closed diagram
$d = \bra{J} F(c) \ket{I} \in \bf{QPath}$, corresponding to the event that we
observe output $J$ when we input $I$ in a chip $c$.
Using only the $\bf{Path}$ axioms together with the normalisation rule,
we can rewrite $d$ as in the following example.
$$\scalebox{0.7}{\tikzfig{figures/path-boson-rewrite}}$$
where we use the following syntactic sugar:
$\scalebox{0.7}{\tikzfig{figures/path-syntactic}}$.
At the end of the rewriting process, we obtain a weighted bipartite graph.
Let us denote this graph by $G_d = (N, E)$ where $N$ is the set of nodes
and $E \sub N^2$ is the set of edges, together with $w: E \to \bb{C}$ an assignment
of complex weights to every edge. Note that $G$ is an undirected graph,
i.e. $(i, j) \in E \implies (j, i) \in E$.

\begin{proposition}[Normal form]
    Any closed diagram $d \in \bf{QPath}$ can be reduced to a pair $(G_d, N_d)$,
    where $G_d$ is weighted bipartite graph and $N_d$ is a normalisation factor,
    using the axioms in Figure \ref{path-classical} and the normalisation law.
\end{proposition}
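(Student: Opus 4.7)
The plan is to adapt the three-step normal form procedure already described for $\bf{Path}$ (eliminating units, then separating comonoids from monoids via the bialgebra law, then contracting parallel edges via the additive rule) to the setting of closed $\bf{QPath}$ diagrams, treating the new black $0 \to 1$ and $1 \to 0$ generators as labelled boundary markers that the rewrite must not touch. Because $d$ is closed, every wire in $d$ is traced between some generator of type $0 \to 1$ (black or white) at the bottom and some generator of type $1 \to 0$ (black or white) at the top, possibly threading through the bialgebra structure $(\delta,\epsilon,\mu,\eta)$, symmetries $\sigma$, and scalar endomorphisms $r$.

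First I would discharge every white $\eta$ and $\epsilon$ using the (co)unit axioms, so that afterwards the only boundary generators are the black sources and sinks. Next I apply the bialgebra law together with the (co)commutativity and (co)associativity laws to pull every $\delta$ down into a tree whose leaves are attached to the black sources, and every $\mu$ up into a tree whose leaves are attached to the black sinks. This rewriting terminates by the same potential argument as in the $\bf{Path}$ normal-form algorithm, and leaves a diagram stratified into three layers: a layer of $\delta$-trees rooted at the sources, a middle region containing only parallel scalar wires $r$ and crossings $\sigma$, and a layer of $\mu$-trees rooted at the sinks. The multiplicative law then collapses each source-to-sink path through the middle into a single weighted wire, and the additive law contracts the parallel wires between any fixed source-sink pair into a single edge carrying the sum of their weights. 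The result is exactly the data of a weighted bipartite graph $G_d$ on the set of black sources and black sinks of $d$.

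Finally, I apply the normalisation law from Figure \ref{path-quantum} at each black node of high multiplicity to strip off the combinatorial prefactor it introduces; collecting all of these prefactors gives the scalar $N_d$, leaving the remaining diagram as the bare bipartite graph $G_d$. Soundness of each rewrite with respect to $\B$ is immediate since all the rules used are already sound for the bosonic interpretation, so the pair $(G_d, N_d)$ computes the same amplitude as $d$.

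The main obstacle is confluence: the bialgebra rewrite can balloon the number of internal parallel wires exponentially, and since the additive contraction only meaningfully reduces \emph{after} all $\delta$'s have been driven past all $\mu$'s, one has to argue that the termination of the $\delta/\mu$-separation phase is independent of strategy and that no ordering of the normalisation applications at the black boundary nodes affects the final scalar $N_d$. This reduces to the same confluence property that underlies the $\bf{Path}$ normal form together with the fact that the normalisation law is local to individual black vertices and therefore commutes with all other rewrites of the diagram.
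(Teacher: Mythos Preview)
Your proposal is correct and follows essentially the same approach as the paper: the paper's proof is a single sentence stating that the procedure is the $\bf{Path}$ normal-form algorithm together with the normalisation law to determine $N_d$, which is exactly the strategy you spell out in detail. Your additional discussion of termination and confluence goes beyond what the paper provides, but the underlying argument is the same.
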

\begin{proof}
    The normal form procedure exemplified above is the same as for $\bf{Path}$,
    with the addition of the use of the normalisation law which determines $N_d$.
\end{proof}

Once $d \in \bf{QPath}$ has been reduced to a weighted bipartite graph, we may
further reduce it down to a scalar value by using the branching and scalar laws.
Most terms obtained by branching will cancel out because of the first scalar law.
The remaining terms are found to be in one-to-one correspondence with
the \emph{perfect matchings} of $G_d$.
Recall that a matching for a graph $G$ is a subset of the edges
$M \sub E$ such that no node is contained in two edges of $M$.
A perfect matching is a matching $M$ such that every node is contained in an
edge of $M$.

\begin{theorem}[Matchings]
    For closed diagrams $d: 0 \to 0 \in \bf{QPath}$, the rewrite rules of
    $\bf{QPath}$ are complete for the bosonic intepretation
    $\B : \bf{QPath} \to \bf{Hilb}_\otimes$, which moreover satisfies:
    \begin{equation}\label{eq-perfect-matchings}
        \B(d) = N_d \sum_{M}\prod_{e \in M} w_e
    \end{equation}
    where $M$ ranges over the perfect matchings of the graph $G_d$.
\end{theorem}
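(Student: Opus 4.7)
The plan is to split the statement into a soundness part (the formula \eqref{eq-perfect-matchings}) and a completeness part (any two closed diagrams with the same bosonic value are provably equal), and to derive completeness directly from soundness by reducing every closed diagram to the single scalar $\B(d)$.

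First, I would invoke the preceding Normal form proposition to reduce $d$ to a pair $(G_d, N_d)$: only the $\bf{Path}$ axioms together with the normalisation law are used, and the result is a weighted bipartite graph whose left vertices are the black creation nodes $\ket{n_i}$ of $d$, whose right vertices are the black annihilation nodes $\bra{m_j}$, and whose edges carry complex weights $w_{ij}$ produced by the block-matrix multiplication performed by the copy/multiply normalisation. At this point the diagram has the shape of a bipartite "spider web" between black nodes joined by single wires labelled by scalars, with $N_d$ collecting the $\sqrt{n!}$ factors pulled out by the normalisation rule.

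Next, I would iteratively apply the branching law of Figure~\ref{path-quantum} at each black node of multiplicity $n$, expanding it into a sum of terms indexed by the ways of distributing $n$ units of particle number among the edges incident to that node. After branching at every black vertex, each summand becomes a closed diagram built entirely of elementary arcs of the form $\bra{k} r \ket{k'}$ joining a left black dot to a right black dot. The first scalar law forces each such arc to evaluate to zero unless $k = k'$, and then reduces it to $r^k \sqrt{k!}$. Combined with the particle-conservation constraint imposed at each black node by the branching expansion, the only surviving distributions are those where every edge carries $0$ or $1$ unit, i.e.\ exactly those arising from a perfect matching $M$ of $G_d$. A direct count shows that the multinomial coefficients $\binom{n}{k_1,\dots,k_r}^{1/2}$ produced by the branching law at creation and annihilation nodes, together with the $\sqrt{k!}$ factors from the scalar law, reassemble precisely into $N_d$, so each surviving matching contributes $\prod_{e\in M} w_e$, giving formula \eqref{eq-perfect-matchings}.

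Having established \eqref{eq-perfect-matchings}, completeness for closed diagrams follows immediately: every closed diagram has been rewritten, using only the stated axioms, to the complex number $\B(d)$, so two closed diagrams $d_1, d_2$ with $\B(d_1) = \B(d_2)$ admit a provable chain of equalities through their common scalar value. The main obstacle is the combinatorial bookkeeping of the second step: one has to check carefully that the square-root factors produced by branching at both endpoints of every matched edge, together with the $\sqrt{k!}$ terms from the scalar law and the overall $N_d$ collected during normalisation, cancel to leave a clean $N_d \prod_{e \in M} w_e$ per matching, with no stray multinomial coefficients and with all non-matching terms identically zero. A convenient way to manage this is to carry out the verification first on the one-boson sector, where $G_d$ is a bipartite graph of degree one and the formula reduces to the permanent identity of the previous section, and then use the Fock-space decomposition $\B(A\oplus B) = \B(A)\otimes\B(B)$ to bootstrap to arbitrary occupation numbers.
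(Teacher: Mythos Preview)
Your approach is essentially the paper's: reduce to the normal form $(G_d,N_d)$, apply the branching law, kill the non-matching terms with the scalar (``bone'') law, identify what survives with perfect matchings, and deduce completeness by rewriting any closed diagram to its scalar value. The paper phrases this more tersely (branching produces $n^n$ terms, the bone law leaves the $n!$ matchings, invertibility of the rules lets one go from $d$ to $d'$), but the skeleton is the same.

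One point you should tighten. After the normal-form step the normalisation law has already split every $\ket{n}$ into $n$ separate single-photon creations, with the $\sqrt{n!}$ factors absorbed into $N_d$. So the vertices of $G_d$ are single-photon black dots, not multi-photon $\ket{n_i}$'s as you write; branching at each node then distributes \emph{one} unit among its incident edges, not $n$ units. With that correction the multinomial coefficients you worry about are all trivially $1$, the ``perfect matching'' language becomes literally correct, and the combinatorial bookkeeping you flag as the main obstacle evaporates. Your closing bootstrap via $\B(A\oplus B)=\B(A)\otimes\B(B)$ is then unnecessary --- and in any case it is a semantic identity about the functor $\B$, not a $\bf{QPath}$ rewrite, so it would not on its own establish syntactic completeness.
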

\begin{proof}
    We need to show that if two closed diagrams $d, d'$ have the same intepretation
    $\B(d) = \B(d') \in \bb{C}$, then we can rewrite from $d$ to $d'$ using the
    axioms of $\bf{QPath}$.  To see this, note that for any closed diagram $d$,
    the branching law turns the graph $G_d$ into a sum of $n^n$ terms,
    where $n$ is the number of photon preparations.
    We can cancel most of these terms using the ``bone'' law, which leaves
    us with $n!$ terms corresponding to the perfect matchings of $G_d$: each
    photon preparation is matched to a photon annihilation.
    Finally we reduce each of the terms to a complex value using the scalar law.
    It is a standard result in graph theory that the sum of weights
    of perfect matchings of a graph is equal to the permanent of its
    adjacency matrix, yielding (\ref{eq-perfect-matchings}).
    Therefore we can use the axioms of $\bf{QPath}$ to reduce both $d$ and $d'$
    to the same scalar value in $\bb{C}$.
    Since all the rules of $\bf{QPath}$ are invertible we can rewrite from $d$
    to $d'$, yielding completeness. We do not currently know if the rules are
    complete also for ``open'' diagrams.
\end{proof}

\begin{example}
    For a generic event $d$ with three photons in $\bf{QPath}$, the normal form procedure
    gives us a weighted bipartite graph $G_d$ with input and output of size $3$,
    or equivalently we have a $3 \times 3$ adjacency matrix of weights.
    Using the branching law, we reduce the diagram to the following sum:
    $$\scalebox{0.7}{\tikzfig{figures/qpath-matchings}}$$
    equivalently, we have just split the graph into its perfect matchings.
    Now we use the scalar laws to reduce each term to a complex number. Equivalently,
    we multiply the weights assigned to the edges on each matching. Finally we sum those
    terms to obtain the amplitude. Equivalently, we have computed the permanent
    of the adjacency matrix of $G_d$.
\end{example}

\section{Linear-optical quantum computing}\label{sec-computing}

Our aim in this section is to describe how linear optics is used for qubit quantum
computation. We will do this by giving a complete mapping from the
$\bf{ZX}$ calculus to $\bf{QPath}$.
We start by introducing the ZX calculus on qubits. The dual-rail encoding allows to
encode a logical qubit as a photon in a pair of spatial modes. We show how all
single qubit unitaries may be applied using simple linear optical devices.
We describe fusion measurements as diagrams in $\bf{QPath}$ and show
how they can be used, along with polarising beam splitters, to construct Bell
states and more general cluster states.

\paragraph{ZX calculus.}

The ZX calculus is a graphical language for reasoning about qubit quantum
computation. It has strong links with both circuit-based and measurement-based
models of quantum computing \cite{backens2021}.
The $\bf{ZX}$ calculus is generated by the following basic operations:
$$\scalebox{0.8}{\tikzfig{figures/zx-generators}}$$
We will also use the following syntactic sugar for $X$ states and phases:
$$\scalebox{0.8}{\tikzfig{figures/zx-red-phases}}$$
In this section we are not really interested in the rewrite rules for ZX diagrams,
rather in their interpretation as linear maps between qubits. We will give this
interpretation as we map each $\bf{ZX}$ generator to post-selected optical circuits
in $\bf{QPath}$, and refer to \cite{vandewetering2020, CKbook} for more
in-depth discussions.

\paragraph{Dual rail qubits.}

The dual-rail encoding can be thought of as a translation between polarized and
spatial modes of photons. The polarization states of a single photon are spanned by the basis
states $\ket{H}, \ket{V}$ for horizontal and vertical polarization, and thus
naturally form a \emph{qubit}.
The dual-rail encoding consists in encoding a polarised mode of light as a
pair of spatial modes in $\bf{LO}$ under the mapping
$\ket{H} \mapsto \ket{0, 1}, \, \ket{V} \mapsto \ket{1, 0}$.
The $Z$ basis of a dual rail qubit may be expressed as a pair of $\bf{QPath}$ diagrams:
$$\scalebox{0.9}{\tikzfig{figures/qubit-basis}}$$
The $Z$ and $X$ effects correspond to the following diagrams:
$$\scalebox{0.8}{\tikzfig{figures/qubit-effects}}$$
The $Z$ effect may be implemented by post-selecting a photon detector,
the $X$ effect by precomposition with a beam splitter.
$Z$ phases on dual-rail qubits are obtained as follows:
$$\scalebox{0.9}{\tikzfig{figures/qubit-Z-phases}}$$
The rotations from the $Z$ basis to the $X$ and $Y$ bases are given by the beam splitters $\tt{BS}_H$ and $\tt{BS}$ respectively, defined as follows:
$$\scalebox{0.9}{\tikzfig{figures/qubit-fourier}}$$
where we use the following syntactic sugar:
$\scalebox{0.9}{\tikzfig{figures/qubit-syntactic}}$.
We give the encoding up to scalar factor which does not affect the logic of the mapping. For example, the hadamard gate is technically $\frac{1}{\sqrt{2}} \tt{BS}_H$. In conjunction with $Z$ phases, we can use it to obtain all single
qubit unitaries in dual rail encoding.

\begin{example}[HOM]\label{ex-hom}
    Using the beam splitters above, we obtain two versions of the Hong-Ou-Mandel
    effect which are depicted graphically as follows:
    $$\scalebox{0.8}{\tikzfig{figures/qubit-hom}}$$
\end{example}

\paragraph{Fusion measurements.}

Fusion measurements are Bell measurements on dual-rail qubits.
They correspond to the linear map:
$ \ket{H,H} \, \mapsto \, \ket{H} \, , \, \ket{V,V} \, \mapsto \, \ket{V} \, ,
\, \ket{H,V}, \ket{V,H} \, \mapsto \, 0$, which is denoted as a green spider
with two inputs and one output in $\bf{ZX}$,
and is obtained on dual-rail qubits as the following diagram in $\bf{QPath}$.
$$\scalebox{0.8}{\tikzfig{figures/qubit-fusion}}$$
To see that this measures the Bell basis, note that there must be exactly one photon
in the two middle modes. The input basis state in dual-rail encoding are
$\set{\ket{0101}, \ket{0110}, \ket{1010}, \ket{1001}}$ and this condition
is satisfied only by $\ket{1010}$ and $\ket{0101}$ which correspond respectively
to $\ket{H,H}$ and $\ket{V,V}$.

\paragraph{Bell states.}

We engineer a representation of dual-rail bell states as $\bf{QPath}$ diagrams.
$$\scalebox{0.8}{\tikzfig{figures/qubit-bell}}$$
We can check that this diagram corresponds to the bell state
$\ket{H,H} + \ket{V,V}$ by branching:
$$\scalebox{0.8}{\tikzfig{figures/qubit-bell-checkZ}}$$
and using the Hong-Ou-Mandel effect (Example \ref{ex-hom}).
Similarly, the Bell state $\ket{HH} - \ket{VV}$ may be represented using blue edes as follows:
$$\scalebox{0.8}{\tikzfig{figures/qubit-bell2}}$$
and we can check this using branching and HOM:
$$\scalebox{0.8}{\tikzfig{figures/qubit-bell-check}}$$
Note that there may be different equivalent representations of Bell states.

\paragraph{Polarising beam splitters.}
On bulk optics, the polarization states of photons can be acted upon using
wave plates, polarizing beam splitters (PBSs) and photon counting measurements.
Wave plates are simply $X$ phase rotations, represented as red nodes in $\bf{ZX}$.
The PBS admits no description in $\bf{ZX}$. It does however have a simple
interpretation in $\bf{LO}$:
$$\scalebox{0.8}{\tikzfig{figures/polarization}}$$
In combination with $X$ states and effects, polarising beam splitters can be used
to perform post-selected fusion measurements and their transpose:
$$\scalebox{0.8}{\tikzfig{figures/polarization-fusion}}$$
As an application, the linear-optical protocol for generating Bell states demonstrated
in \cite{zhang2008} may be described as a diagram using PBSs and ZX primitives:
$$\scalebox{0.7}{\tikzfig{figures/polarization-bell-state}}$$
We recover the diagram for the Bell state by reducing to normal form.

\paragraph{Spiders.}
The only missing ZX generator, which we need for a complete mapping
$\bf{ZX} \to \bf{QPath}$, is the $Z$ copy spider.
We may readily deduce its representation using a known equality in ZX:
$$\scalebox{0.8}{\tikzfig{figures/zx-copy-spider}}$$
Similarly, we may turn the input leg into an output using a second Bell state.
This yields a protocol for generating the dual-rail GHZ state using five ancillary
photons. Note that the mapping is in no way unique, and we may obtain several equivalent
protocols by further twisiting the spider above. This however increases the number
of ancillary photons needed. As first shown in \cite{browne2005}, any cluster state
can be obtained by performing additional fusion measurements.

\section*{Outlook}

The theory in this paper is being implemented in DisCoPy \cite{defelice2020b},
the Python library for monoidal categories. DisCoPy already has a number of
tools for qubit quantum computing, including interfaces with tket \cite{sivarajah2021},
PyZX \cite{kissinger2019} and high-performance libraries for classical simulation.
DisCoPy functors will allow to compile qubit circuits and cluster states into
linear optical circuits for efficient simulation with Perceval \cite{perceval2022}
and future interfaces with photonic devices.

\section*{Acknowledgements}
The authors would like to thank Richie Yeung, Harny Wang, Douglas Brown,
Alex Cowtan and Anna Pearson also from Quantinuum in Oxford, Alexis Toumi in Paris,
Amar Hazihasanovic in Estonia, Lee Rozema and Iris Agresti in Vienna and Terry Rudolph from PsiQuantum in California,
for the many conversations on optics which led to this manuscript and the ones to come.


\bibliographystyle{unsrt}
\bibliography{ms}

\end{document}